\algnewcommand\algorithmicinput{\textbf{Input:}}
\algnewcommand\Input{\item[\algorithmicinput]}
\algnewcommand\algorithmicoutput{\textbf{Output:}}
\algnewcommand\Output{\item[\algorithmicoutput]}
\newlength\myindent
\algnewcommand\algorithmicforeach{\textbf{for each}}
\begin{document}
%
\title{Online Participatory Sensing in Double Auction Environment with Location Information}
%
%
%

\author{ Jaya~Mukhopadhyay, Vikash~Kumar~Singh,
        Sajal~Mukhopadhyay, Anita~Pal
\thanks{J. Mukhopadhyay is with the Department
of Mathematics, National Institute of Technology, Durgapur,
WB, 713209 India e-mail: (jayabesu@gmail.com).}
\thanks{V. K. Singh is with the Department
of Computer Science and Engineering, National Institute of Technology, Durgapur,
WB, 713209 India e-mail: (vikas.1688@gmail.com).}
\thanks{S. Mukhopadhyay is with the Department
of Computer Science and Engineering, National Institute of Technology, Durgapur,
WB, 713209 India e-mail: (sajmure@gmail.com).}
\thanks{A. Pal is with the Department
of Mathematics, National Institute of Technology, Durgapur,
WB, 713209 India e-mail: (anita.buie@gmail.com).}
}
\maketitle

\begin{abstract}
 As mobile devices have been ubiquitous, participatory sensing emerges as a powerful tool to solve many contemporary real life problems. Here, we contemplate the participatory sensing in online double auction environment by considering the location information of the participating agents. In this paper we propose a truthful mechanism in this setting and the mechanism also satisfies the other economic properties such as budget balance and individual rationality.
\end{abstract}

\begin{IEEEkeywords}
Participatory sensing, location information, online double auction.
\end{IEEEkeywords}

%
\IEEEpeerreviewmaketitle

\section{Introduction}
\IEEEPARstart{P}{articipatory sensing} \cite{e1}\cite{Lane:2010:SMP:1866991.1867010} is a distributed problem solving model in which the common people (may not be \emph{professionals}) of indefinite size carrying smart devices (such as Tablets, smart watches, smartphones, etc.) may be engaged to accomplish the tasks or sub-tasks. Examples include measuring the level of smokes and toxic gases present in the environment of certain industrial area \cite{Mun:2009:PPE:1555816.1555823}\cite{Massung:2013:UCS:2470654.2470708}, keeping track of auto-mobiles traffic condition in highly populated urban areas \cite{lan2013}\cite{Horvitz:2005:PES:3020336.3020371}, monitoring the state of the roads (eg. patholes, bumps, etc.) by attaching 
sensors to cars \cite{Smartphones_nericell:rich}, and several directions in healthcare and physical fitness \cite{6716722}\cite{conf:issnip:AflakiMBH13}. In a typical participatory sensing model, there exists three different participating community namely; (a) \emph{task requester(s)}, (b) \emph{platform (or third party)}, and (c) \emph{task executer(s)}. The working of participatory sensing system is initiated by the \emph{task requester(s)} who submit their sensory task(s) to the \emph{platform} that is/are to be accomplished by the common people incorporated with smart devices. Once the \emph{third party} (or \emph{platform}) receives the task(s), he/she (henceforth he) outsource the task(s) or subtask(s) to the group of common people carrying smart devices. In the participatory sensing terminology, these common people carrying smart devices are termed as \emph{task executer(s)}. In order to complete the assigned task(s), the \emph{task executers} have to utilize their owned smart device resources (such as \emph{battery}, \emph{GPS} system, etc.). Now, the obvious question that may arise is: how to motivate the \emph{task executer(s)} to accomplish the projected task(s) voluntarily by utilizing their resources. Moreover, it is to be noted that each of the \emph{task requesters} and the \emph{task executers} (synonymously called agents) are \emph{strategic}. In general, \emph{strategic} means that, the agents chooses their strategies so as to maximize a well defined individualistic utility.     
\begin{figure}[H]
\begin{center}
\includegraphics[scale=0.4]{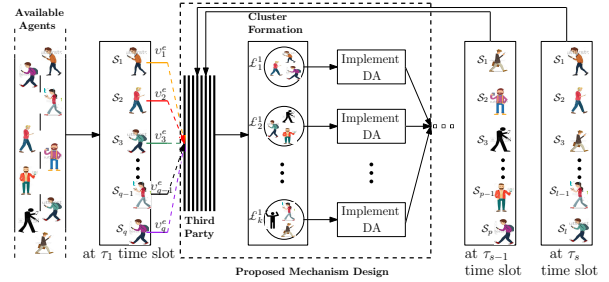}
\caption{System model}
\label{fig:model}
\end{center}
\end{figure}

\noindent  Answering to above posed question, for motivating the large group of \emph{task executers} for voluntary participation, one can think of the solution to incentivize the participating \emph{task executers} by some means, once the task(s) is/are completed. In this paper, we study a \textbf{s}ingle \textbf{t}ask \textbf{e}xecution \textbf{p}roblem (STEP); where there are multiple \emph{task requesters} having a single common task,  that is to be accomplished by the multiple \emph{task executers} in an \emph{online} environment. By \emph{online} environment, we mean that the agents arrives in the system and departs from the system on a regular basis. The proposed model is shown in \textbf{Fig.} \ref{fig:model}. The novelty that is introduced in this is to develop a game theoretic approach to model the STEP. As their are multiple \emph{task executers} and multiple \emph{task requesters}, this give rise to a double auction framework. In our model the location information of the agents are considered so as to cover a substantial area albeit collecting to much of redundant data. It is to be noted that the \emph{task executers} location information are tracked implicitly during the supply of the completed tasks to the \emph{third party}. By doing so it is guaranteed that the \emph{task executers} can't gain by lying their actual location.  The location aware participatory sensing was first introduced in \cite{l2}. However location aware participatory sensing in online double auction environment was not addressed in \cite{l2}. In this paper we have addressed the location aware participatory sensing in online DA. To avoid collecting redundant data, clustering concept is implemented before running the auction in each round.\\
\\
The main contributions of this paper are:
\begin{itemize}
\item In the participatory sensing scenario, we have proposed a framework to study the STEP with multiple \emph{task executers} and multiple \emph{task requesters} in an online environment by utilizing the concept of \emph{clustering} along-with auction. As their are multiple \emph{task requesters} and multiple \emph{task executers}, it is a good choice to model the participatory sensing scenario using double auction.    
\item We propose a \textbf{s}ingle \textbf{t}ask \textbf{e}xecution \textbf{m}echanism (STEM) for STEP motivated by \cite{DBLP:journals/corr/MukhopadhyayPMS16}\cite{Jbre_INte_2005} that takes into account multiple \emph{task executers} and multiple \emph{task requesters}. We design a \emph{truthful} (or \emph{incentive compatible}) mechanism for this interesting class
of problem.
\item We have shown that STEM is bounded above by $O(k  \kappa  n^2)$. Moreover, we have also shown that our STEM satisfies the several economic properties such as \emph{truthfulness}, \emph{individual rationality}, and \emph{Budget balance}. 
\item A substantial amount of simulation is done to compare STEM with the carefully designed benchmark scheme (McAfee's rule).
\item We have proved that in the given \emph{online} environment with clustering the agents can't gain by manipulating their valuation, arrival and departure time in a given arrival-departure window. 
\end{itemize} 
\noindent The remainder of the paper is structured as follows. Section II
elucidates the preliminary concepts about participatory sensing. Section III describes our proposed model. The proposed mechanisms is illustrated in section IV. The paper is concluded with the possible future directions in section V.

\section{Prior works}
 Recently, there has been a spate of research work at the border of participatory sensing and in their several applications areas. In this section we discuss the prior works on participatory sensing, taking into account \emph{incentives} aspects, quality of data or information supplied, privacy of the \emph{task executers} performing the task, and different set-up participatory sensing in budget constraint environment.\\
 In order to get a nice overview of the participatory sensing the readers may refer \cite{Restuccia:2016:IMP:2925994.2888398}\cite{e1}\cite{Lane:2010:SMP:1866991.1867010}\cite{journals:cm:GantiYL11}\cite{10.1109:MIC.2012.70}. Currently, the participatory sensing is one of the open research areas. One obvious question that arise in the participatory sensing environment is: how to motivate the large common people carrying smart devices to participate in the system? To answer this question in a better way, the researchers have provided their immense effort in this direction. In past, for voluntary participation of the \emph{task executers} several incetivizing schemes are discussed in literature.
\cite{Reddy:2010:EMP:1864349.1864355} follows the fixed price payment scheme, where the winning agents are paid a fixed price as their payment. However, the fixed price based incentive scheme may not satisfy the several participating agents because of the amount of effort they make in the data collection process. Moreover, the incentive based schemes has got a special attention from the research community. \cite{TLHP_INFO_2014} addresses the incentive
scheme under the reverse auction based setting (single buyer
and multiple sellers). 	
Several incentive schemes has been introduced in \cite{DBLP:journals/corr/abs-1305-6705}\cite{Lee2010PMC} \cite{z1}. In \cite{jas1}\cite{r1}\cite{v1} efforts have been made by the researchers to show the effect of quality of data collected by the agents to the overall system by incorporating the quality of data to the system in some sense. Some initial research has been carried out by \cite{e1} \cite{v1} \cite{par1} \cite{par2} to preserve the privacy of the agents so that their private information associated with the data are not revealed publicly. Recently, \cite{l2} provides the incentive schemes under the location constraints. in their work they have addressed location aware participatory sensing in one buyer and multiple seller environment. In our model we have explored more general multiple sellers-multiple buyers framework in more challenging location aware participatory sensing in online double auction environment.

\section{System model}
In this section, considering an online environment we formalize a \textbf{s}ingle \textbf{t}ask \textbf{e}xecution \textbf{p}roblem (STEP) for the participatory sensing scenario. By online environment, we mean that the agents arrives in the auction market and departs from the auction market on a regular basis in a given time horizon $\varmathbb{T}$ (say \emph{a day}). Let $\boldsymbol{\mathcal{B}} = \{\boldsymbol{\mathcal{B}}_1, \boldsymbol{\mathcal{B}}_2, \ldots, \boldsymbol{\mathcal{B}}_m \}$ be the set of \emph{task requesters} and $\mathcal{S} = \{\mathcal{S}_1, \mathcal{S}_2, \ldots, \mathcal{S}_n\}$ be the set of \emph{task executers} such that $m << n$. The \emph{task executers} and \emph{task requesters} are synonymously called agents. Each of the \emph{task executer} $\mathcal{S}_i$ incurs a private cost for performing the available task termed as \emph{valuation} given as $\upsilon_i^{e}$. The set $\upsilon^e$ denotes the set of valuations of all the \emph{task executers} given as $\upsilon^e = \{\upsilon_1^e, \upsilon_2^e, \ldots, \upsilon_n^e\}$. Similar to the \emph{task executers}, each of the \emph{task requester} $\boldsymbol{\mathcal{B}}_i$ has some private value for buying the task after its completion and is given as $\upsilon_{i}^{r}$. The set $\upsilon^r$ denotes the set of valuations of all \emph{task requesters} given as $\upsilon^r = \{\upsilon_1^r, \upsilon_2^r, \ldots, \upsilon_m^r\}$. Each of the \emph{task executers} and \emph{task requesters} places their private information in a sealed bid manner. It is to be noted that, due to the strategic nature of the agents, they can mis-report their respective private values. So, it is convenient to represent the cost reported for performing the task by the \emph{task executer} $\mathcal{S}_i$ as $\hat{\upsilon}_i^{e}$ and the value of \emph{task requester} $\mathcal{B}_i$ for buying the task as $\hat{\upsilon}_{i}^{r}$. $\hat{\upsilon}_{i}^{e} = \upsilon_{i}^{e}$ and $\hat{\upsilon}_{i}^{r} = \upsilon_{i}^{r}$ describes the fact that $\mathcal{S}_i$ and $\mathcal{B}_i$ are not deviating from their true valuations. In this model, there are multiple \emph{task requesters} (as \emph{buyers}) and multiple \emph{task executers} (as \emph{sellers}). So, this is a perfect setting to model the STEP as an online double auction problem (ODAP). Due to online nature of the STEP, one of the realistic parameters that is perceived in our proposed model is arrival and departure time of the agents. The arrival time of any agent is the time at which he/she (henceforth he) knows about the auction market or the time at which he first become aware of his desire to involve into the auction market after entering into the system. The arrival time of each \emph{task executer} $\mathcal{S}_i$ and each \emph{task requester} $\boldsymbol{\mathcal{B}}_i$ are given as $a_{i}^{e}$ and $a_{i}^{r}$ respectively. For a \emph{task requester}, we interpret the departure time as the final time in which he values the task. For a \emph{task executer}, the departure time is the final time in which he is willing to accept payment. The departure time of each \emph{task executer} $\mathcal{S}_i$ and each \emph{task requester} $\boldsymbol{\mathcal{B}}_i $ are given as $d_{i}^{e}$ and $d_{i}^{r}$ respectively. The agents may mis-report their respective arrival time or departure time or both within the arrival-departure window in order to gain. It is convenient to represent the arrival time of \emph{task executer} $\mathcal{S}_i$ and \emph{task requester} $\boldsymbol{\mathcal{B}}_i$ as $\hat{a}_{i}^{e}$ and $\hat{a}_i^{r}$ respectively. Similarly, more conveniently the departure time of \emph{task executer} $\mathcal{S}_i$ and \emph{task requester} $\boldsymbol{\mathcal{B}}_i$ is $\hat{d}_{i}^{e}$ and $\hat{d}_i^{r}$ respectively. $\hat{a}_{i}^{e} = a_{i}^{e}$, $\hat{a}_{i}^{r} = a_{i}^{r}$, $\hat{d}_{i}^{e} = d_{i}^{e}$, and $\hat{d}_{i}^{r} = d_{i}^{r}$ describes the fact that $\mathcal{S}_i$ and $\boldsymbol{\mathcal{B}}_i$ are not misreporting their arrival and departure time.
In our proposed mode\newtheorem{lemma}{Lemma}l, a day is termed as time horizon $\varmathbb{T}$. The time horizon $\varmathbb{T}$ is partitioned into several time slots (not necessarily of same length) given as $\varmathbb{T} = \{\tau_1, \tau_2, \ldots, \tau_s\}$. For each time slot $\tau_i$, a new set of active \emph{task requesters} $\mathcal{R} \subset \boldsymbol{\mathcal{B}}$ and a new set of active \emph{task executers} $\mathcal{U} \subset \mathcal{S}$ arrives in the auction market. At each time slot $\tau_i$, considering the newly active \emph{task executers} $\mathcal{U}$, a set of clusters of \emph{task executers} are formed and is given as: $\mathsterling^i = \{\mathsterling_1^i, \mathsterling_2^i, \ldots, \mathsterling_k^i\}$; where $\mathsterling_j^i$ is termed as the $j^{th}$ cluster for $\tau_i$ time slot. Over the $\varmathbb{T}$ time horizon the cluster vector can be given as: $\mathsterling = \{\mathsterling^{1}, \mathsterling^{2}, \ldots, \mathsterling^{s}\}$. Once the clusters are formed, then for each cluster $\mathsterling_j^i$ several independent double auction will be performed. At each time slot $\tau_i \in \varmathbb{T}$ and from each cluster $\mathsterling_j$ the set of winning \emph{task executers}-\emph{task requesters} are paired. At each time slot $\tau_i \in \varmathbb{T}$, 
our proposed mechanism matches one \emph{task executer} to one \emph{task requester} in a cluster.
More formally, a mechanism $\mathcal{M}$ = ($\mathcal{A}$, $\mathcal{P}$), where, $\mathcal{A}$ is called an allocation function and $\mathcal{P}$ is called a payment function. The allocation function $\mathcal{A}$ maps the pair of \emph{task executers} valuation and \emph{task requesters} valuation to the possible \emph{task executer-task requester} pairs. Following the payment function, the payment of each \emph{task executer} $\mathcal{S}_i$ and each \emph{task requester} $\boldsymbol{\mathcal{B}}_i$ is given as $\mathcal{P}_i^{e}$ and $\mathcal{P}_i^{r}$ respectively. As the \emph{task executers} and \emph{task requesters} are strategic in nature, they will try to maximize their utility. The utility of any \emph{task executer} is defined as the difference between the p\let\oldReturn\Return
\renewcommand{\Return}{\State\oldReturn}ayment received by the \emph{task executer} and the true valuation of the \emph{task executer}. More formally, the utility of $\mathcal{S}_i$ is  $\varphi^e_i = \mathcal{P}_i^{e}-\upsilon_i^{e}$, if $\mathcal{S}_i$ wins otherwise 0.
Similarly, the utility of any \emph{task requester} is defined as the difference between the true valuation of the \emph{task requester} and the payment he pays. More formally, the utility of $\boldsymbol{\mathcal{B}}_i$ is $\varphi^r_i = \upsilon_i^{r} - \mathcal{P}_i^{r}$, if $\boldsymbol{\mathcal{B}}_i$ wins 0 otherwise.

\section{STEM: Proposed mechanism}
\subsection{Outline of STEM}
In order to present the brief idea of the STEM to the readers the outline of the STEM is discussed before going into the detailed view. The outline of the STEM can be thought of as a three stage process:
\begin{itemize}
\item [$\blacksquare$]For any auction round $t \in \varmathbb{T}$ find out the active \emph{task executers} and \emph{task requesters}.
\item [$\blacksquare$]Cluster the active \emph{task executers} based on \emph{k-means} clustering technique.
\item [$\blacksquare$]Run the online double auction separately for each cluster of \emph{task executers}. \emph{Task requesters} will be the same for all the clusters.  
\end{itemize}
\subsection{Sketch of the STEM}
The three stage STEM can further be studied under four different sections: \emph{Main routine}, \emph{Cluster formation}, \emph{Payment}, and \emph{Allocation}. First, the sub-part of the proposed mechanism $i.e.$ the \emph{Main routine} phase is discussed and presented. The \emph{Cluster formation}
phase is addressed next. Next, the crucial part of the proposed mechanism $i.e.$
\emph{payment} phase motivated by \cite{Jbre_INte_2005} is discussed and presented. Finally, one of the \emph{allocation}
phase is addressed.
\subsubsection{Main routine}
The idea lies behind the construction of \emph{Main routine} is to handle the system partitioned into different time slots $\tau_i \in \varmathbb{T}$. The input to the \emph{Main routine} are the set of \emph{task executers} at $\tau_i$ time slot $i.e.$ $\mathcal{S}_{\tau_i}$, the set of available \emph{task requesters} at $\tau_i$ time slot $i.e.$ $\mathcal{B}_{\tau_i}$, the overall time horizon $i.e.$ $\varmathbb{T}$, the set of cost of execution of all \emph{task executers} $i.e.$ $\hat{\upsilon}^{e}$, and the set of value for buying the executed tasks by all the \emph{task requesters} $i.e.$ $\hat{\upsilon}^{r}$. The output is the set of allocation vector $\mathcal{A}$. In line 2, the several data structures that are utilized in \emph{main routine} are set to $\phi$. The \emph{for} loop in line 3 iterates over all the time slots $\tau_i \in \varmathbb{T}$. In line 4, the $active\_TE()$ function returns the set of active \emph{task executers} at time slot $\tau_i$ and is held in $\mathcal{U}$ data structure. Whereas, the set of active \emph{task requesters} at any time slot $\tau_i$ is determined by the function $active\_TR()$ and is held in $\mathcal{R}$ data structure. The \emph{for} loop in line $6-8$ iterates over the set of active \emph{task executers} $\mathcal{U}$ and keeps track of costs of the members in set $\mathcal{U}$ in $\gamma^e$ data structure. Similarly, the \emph{for} loop in line $9-11$ iterates over the set of active \emph{task requesters} $\mathcal{R}$ and keeps track of values of the members in set $\mathcal{R}$ in $\gamma^r$ data structure.

\begin{algorithm}[H]
\caption{Main routine($\mathcal{S}_{\tau_i}$, $\boldsymbol{\mathcal{B}}_{\tau_i}$, $\varmathbb{T}$, $\hat{\upsilon}^e$, $\hat{\upsilon}^r$)}
\begin{algorithmic}[1]
   \Output $\mathcal{A}$ $\leftarrow$ $\{\mathcal{A}_{1}, \mathcal{A}_2, \dots, \mathcal{A}_k\}$
\State \textbf{begin}
    \State $\mathcal{U} \leftarrow \phi$, $\mathcal{R} \leftarrow \phi$, $\gamma^e \leftarrow \phi$, $\gamma^r \leftarrow \phi$, $\mathsterling_{e}^* \leftarrow \phi$, $\mathsterling_{r}^* \leftarrow \phi$ 
    \ForAll{$\tau_i \in \varmathbb{T}$}
        \State $\mathcal{U} \leftarrow active\_TE~(\mathcal{S}_{\tau_i}, \tau_i)$  \Comment $\forall \mathcal{S}_i \in \mathcal{U}$, $\hat{a}_i^{e} \leq \tau_i < \hat{d}_i^{e}$
        \State $\mathcal{R} \leftarrow active\_TR~(\mathcal{B}_{\tau_i}, \tau_i)$ \Comment $\forall \mathcal{B}_i \in \mathcal{R}$, $\hat{a}_i^{r} \leq \tau_i < \hat{d}_i^r$
        \For{each $\mathcal{S}_i \in \mathcal{U}$}
            \State $\gamma^e \leftarrow \gamma^e \cup \mathcal{S}_i \cdot \hat{\upsilon}_{i}^{e}$ \Comment{$\hat{\upsilon}_{i}^{e}$  is the valuation \hspace*{50mm} component of $\mathcal{S}_i$ } 
        \EndFor
        \For{each $\boldsymbol{\mathcal{B}}_i \in \mathcal{R}$}
            \State $\gamma^r \leftarrow \gamma^r \cup \boldsymbol{\mathcal{B}}_i \cdot \hat{\upsilon}_{i}^{r}$ \Comment{$\hat{\upsilon}_{i}^{r}$ is the valuation \hspace*{50mm} component of $\boldsymbol{\mathcal{B}_i}$ }
        \EndFor
        \State $\mathsterling^i$ $\leftarrow$ Cluster formation ($\mathcal{U}$, $k$)
        \For {each $\mathsterling_j^i \in \mathsterling^i$}
        \State $\mathcal{U}_c \leftarrow Sort\_ascend$($\mathsterling_{j}^i$, $\mathcal{S}_i \cdot \gamma_i^e$) \Comment Sorting based on $\gamma_i^{e} \in \gamma^e$ for all $\mathcal{S}_i \in \mathsterling_{j}^i$ 
        \State $\mathcal{R} \leftarrow Sort\_descend$($\mathcal{R}$, $\mathcal{B}_i \cdot \gamma_i^r$) \Comment Sorting based on $\gamma_i^{r} \in \gamma^r$ for all $\mathcal{B}_i \in \mathcal{R}$ 
        \State Payment ($\mathcal{U}_c, \mathcal{R}$)
        \State $\mathcal{U}_c'^{(j)}$ $\leftarrow$ $\mathcal{U}_c'^{(j)}$ $\cup$ $\mathcal{U}_c^{*}$
        \State $\mathcal{R}'^{(j)}$ $\leftarrow$ $\mathcal{R}'^{(j)}$ $\cup$ $\mathcal{R}_c$
        \State $\mathsterling_e^{*}$ $\leftarrow$ $\mathsterling_e^{*}$ $\cup$ $\mathcal{U}_c'^{(j)}$
        \State $\mathsterling_r^{*}$ $\leftarrow$ $\mathsterling_r^{*}$ $\cup$ $\mathcal{R}'^{(j)}$
        \State $\mathcal{U}_c \leftarrow \phi$
        \EndFor   
        \State $\gamma^e \leftarrow \phi$, $\gamma^r \leftarrow \phi$
        \State New task executers and task requesters comes.
        \State $\mathcal{S}_{\tau_i}$ $\leftarrow$ $\mathsterling_{e}^{*} \cup \{new~task~executers\}$
        \State $\mathcal{B}_{\tau_i}$ $\leftarrow$ $\mathsterling_{r}^{*} \cup \{new~task~requesters\}$
    \EndFor        
    \Return $\mathcal{A}$
\State \textbf{end}
\end{algorithmic}
\end{algorithm}
 In line 12, a call to Cluster formation($\mathcal{U}$, $k$) is made; where \emph{k} is the number of clusters to be formed. Once in a particular time slot $\tau_i$ the cluster set $\mathsterling^i$ is formed in line 12, the payment and based on the payment the allocation is determined which is captured by the \emph{for} loop in line 13-22. In line 14 and 15 the \emph{task executers} and \emph{task requesters} are sorted in ascending and descending order respectively.
 In line 16, a call to \emph{payment} phase is done. In line 17 and 18 all the active task executers and task requesters at time $\tau_i$ in $j^{th}$ cluster which are not paired are placed in $\mathcal{U}_{c}'^{(j)}$ and in $\mathcal{R}'^{(j)}$ respectively.
 The $\mathsterling_{e}^*$ and $\mathsterling_{r}^*$ data structures keeps track of all the active \emph{task executers} and \emph{task requesters} in a given time slot $\tau_i$ but not allocated in there respective clusters. In line 21, the $\mathcal{U}_c$ data structure is set to $\phi$.
 The data structure $\gamma^e$ and $\gamma^r$ are set to $\phi$. Now, the new \emph{task executers} and new \emph{task requesters} are arriving in the market for the next time slot as depicted in line 24. In line 25-26 $\mathcal{S}_{\tau_i}$ and $\mathcal{B}_{\tau_i}$ captures the set of all \emph{task executers} and \emph{task requesters} that are going to participate in the next time slot. Line 28 returns the final allocation set $\mathcal{A}$.
\subsubsection{Cluster formation}
The input to the \emph{Cluster formation} are the set of active task executers at any time slot $\tau_i$ given as $\mathcal{U}$, and the number of cluster to be formed $i.e$ \emph{k}. Considering the \emph{centroid determination} phase, Line 2 initializes the $\mathcal{C}$ data structure utilized in the \emph{Cluster formation} algorithm. The \emph{random()} function in line 4 randomly picks a point as a centroid from the available point set $\mathcal{X}$. The randomly selected centroid is placed in $\mathcal{C}$ data structure using line 5. 
\begin{algorithm}[H]
\caption{Cluster formation ($\mathcal{U}$, $k$)}
\begin{algorithmic}[1]
   \State \textbf{begin}
	  
	\State $\mathcal{C} \leftarrow \phi$ \Comment \textbf{k centroid determination}
	\While{$|\mathcal{C}| \neq k$}
	\State $x^* \leftarrow random(\mathcal{X})$ \Comment Picking a random point $\mathcal{X}_{\ell} \in \mathcal{X}$
	\State $\mathcal{C} \leftarrow \mathcal{C} \cup \{x^*\}$
	\EndWhile
	\Repeat  \Comment \textbf{k cluster formation}
	\State $\mathsterling^i \leftarrow \phi$, $\mathsterling_j^i \leftarrow \phi$
	\For{each $\mathcal{S}_k \in \mathcal{U}$}
	  \For{each $\mathcal{X}_j \in \mathcal{C}$}
	  \State $D' \leftarrow D' \cup \{D(\mathcal{S}_k, \mathcal{X}_j)\}$ \Comment{\textbf{Distance between \hspace*{55mm} $\mathcal{S}_k$ and $\mathcal{X}_j$}}
	\EndFor	  
	\State $j^* \leftarrow argmin_j$ $D'$
	  \State $\mathsterling_{j^*}^i \leftarrow \mathsterling_{j^*}^i \cup \{\mathcal{S}_k\}$
	\EndFor
	\State $\mathcal{C} \leftarrow \phi$
	\For{$j=1$ to $k$}
	\State $\mathsterling^i \leftarrow \mathsterling^i \cup \mathsterling_j^i$
	\EndFor
  \For{each $\mathsterling_j^i \in \mathsterling^i$}
    \State $\mathcal{X}'_j = \frac{1}{|\mathsterling_{j}^{i}|} \sum_{x_\ell \in \mathsterling_j^i} x_\ell$ \Comment{$x_\ell$ is the point $\ell$ i.e. a two \hspace*{35mm}dimensional vector in cluster $\mathsterling_{j}^{i}$}
    \State $\mathcal{C} \leftarrow \mathcal{C} \cup \mathcal{X}'_j$
  \EndFor
  \Until{change in cluster takes place}
  \Return $\mathsterling^i$
\State \textbf{end}
\end{algorithmic}
\end{algorithm}
 In line 3 the \emph{while} loop ensures that the loop terminates when the size of $k$-centroid are determined. Line 8 initializes the $\mathsterling_{j}^i \leftarrow \phi$ and $\mathsterling^i \leftarrow \phi$. 
The \emph{for} loop in line 9 iterates over the active set of \emph{task executers} $\mathcal{U}$. Line 10-12 determines the closest distance of each $\mathcal{S}_k$ to a centroid $\mathcal{X}_j$. The $\mathsterling_{j^*}^{i}$ in line 14 keeps track of each $\mathcal{S}_k$. In line 16, the $\mathcal{C}$ data structure is set to $\phi$. The $\mathsterling^i$ data structure keeps track of all the clusters formed in a particular time slot $\tau_i$ as depicted in line $17-19$. Line $20-23$ determines the new centroids. The procedure in line 7-24 will be repeated until the change in the cluster is seen. Line 25 returns the set of cluster in any particular time slot $\tau_i$.  

\subsubsection{Payment}
The input to the \emph{payment} phase are the set of winning \emph{task executers} $i.e.$ $\mathcal{U}_c$, and the set of winning \emph{task requesters} $i.e.$ $\mathcal{R}$. In line 2 the $\hat{\mathcal{U}}$ and $\hat{\mathcal{R}}$ data structure are set to $\phi$. The \emph{for} loop in line 3-15 keeps track of payment of the winning \emph{task executers}. The check in line 3 confirms that if the \emph{task executer} belongs to freshly arrived category then the payment is decided by line 5 of the Algorithm 4 otherwise the payment is made using line 7.  Now, the check in line 9 is done to guarantee that the payment made to any task executer $\mathcal{S}_i$ is greater than its cost for executing the task $i.e.$ satisfying the important economic property \emph{individual rationality}. 
\begin{algorithm}[H] 
\caption{Payment ($\mathcal{U}_c$, $\mathcal{R}$)}
\begin{algorithmic}[1]
\State \textbf{begin}
\State $\hat{\mathcal{U}} \leftarrow \phi$, $\hat{\mathcal{R}} \leftarrow \phi$
        \For{each $\mathcal{S}_i \in \mathcal{U}_c$}
            \If{$\hat{a}_i^{e} == \tau_i$} \Comment{\textbf{Fresh arrival}}
               \State $\chi_{i}^{e} \leftarrow \min_{\rho^e \in [\hat{d}_i^e-\kappa, ~\tau_i]} \{\mathcal{P}_{i}^{e}(\rho^e)\}$
            \Else       \Comment{\textbf{Still active}}
                \State $\chi_{i}^{e}$ $\leftarrow$ $\min \{\mathcal{P}_{i}^{e}(\tau_i-1), \mathcal{P}_{i}^{e}(\tau_i)\}$
            \EndIf   
            \If{$\chi_{i}^{e} \geq \hat{\upsilon}_{i}^{e}$}
               \State $\mathcal{P}_e$ $\leftarrow$ $\mathcal{P}_e \cup \{\chi_{i}^{e}\}$
               \State $\hat{\mathcal{U}} \leftarrow \hat{\mathcal{U}} \cup \{\mathcal{S}_i\}$
            \Else:
               \State $\mathcal{S}_i$ is priced out.
            \EndIf         
        \EndFor 
        \For{each $\mathcal{B}_i \in \mathcal{R}$}
            \If{$\hat{a}_i^{r} == \tau_i$} \Comment{\textbf{Fresh arrival}}
               \State $\chi_{i}^{r} \leftarrow \max_{\rho^r \in [\hat{d}_i^r-\kappa, ~\tau_i]} \{\mathcal{P}_{i}^{r}(\rho^r)\}$
            \Else   \Comment{\textbf{Still active}}
                \State $\chi_{i}^{r}$ $\leftarrow$ $\max \{\mathcal{P}_{i}^{r}(\tau_i-1), \mathcal{P}_{i}^{r}(\tau_i)\}$
            \EndIf   
            \If{$\chi_{i}^{r} \leq \hat{\upsilon}_{i}^{r}$}
               \State $\mathcal{P}_r$ $\leftarrow$ $\mathcal{P}_r \cup \{\chi_{i}^{r}\}$
               \State $\hat{\mathcal{R}} \leftarrow \hat{\mathcal{R}} \cup \{\mathcal{B}_i\}$
            \Else:
               \State $\mathcal{B}_i$ is priced out.
            \EndIf         
        \EndFor
        \State Allocation($\hat{\mathcal{U}}$, $\hat{\mathcal{R}}$, $\gamma^{e}$, $\gamma^r$)
\State \textbf{end}
\end{algorithmic}
\end{algorithm} 
The $\mathcal{P}_e$ data structure in line 10 keeps track of all the payment of all the winning \emph{task executers} satisfying the \emph{individual rationality}. If the condition in line 9 is not satisfied by the \emph{task executers}, then the winning \emph{task executers} are priced out of the market as depicted in line 13. The \emph{for} loop in line 16-28 keeps track of payment of the winning \emph{task requesters}. The check in line 17 confirms that if the \emph{task requester} belongs to freshly arrived category then the payment is decided by line 17 otherwise the payment is made using line 19. Now, the check in line 21 is done to guarantee that the payment made by any task executer $\mathcal{S}_i$ is no more than its value for buying the completed task $i.e.$ satisfying the important economic property \emph{individual rationality}. The $\mathcal{P}_r$ data structure in line 22 keeps track of all the payment of all the winning \emph{task requesters} satisfying the \emph{individual rationality}. If the condition in line 22 is not satisfied then the winning \emph{task requester} is priced out of the market as depicted in line 26. Finally, a call to the \emph{allocation phase} is done line 29.      
\paragraph*{Payment function}
For determining the payment of each agent the valuation of the first losing \emph{task executer} and losing \emph{task requester} is taken into consideration which is given by $ \mathcal{I}_j^{*} = argmax_{i} \{\gamma_{i}^{r} - \gamma_i^{e} < 0\}$ such that $\gamma_{i}^{r} = \boldsymbol{\mathcal{B}}_i \cdot \hat{\upsilon}_{i}^{r}$ and $\gamma_{i}^{e} = \mathcal{S}_i \cdot \hat{\upsilon}_{i}^{e}$.
For defining the payment, we further require to 
fetch the valuation of the \emph{task requester} and the \emph{task executer} at the index position $\mathcal{I}_{j}^{*}$. The valuation of the \emph{task requester} at any index position is captured by the bijective function $\Upsilon^r: \mathbb{Z} \rightarrow \mathbb{R}_{\geq 0}$, whereas the valuation of the \emph{task executer} at any index position is captured by the bijective function $\Upsilon^e: \mathbb{Z} \rightarrow \mathbb{R}_{\geq 0}$. 
Let us further denote the valuation of the \emph{task requester} at the index position $\mathcal{I}_{j}^{*}$ 
by $\Upsilon^{r}(\mathcal{I}_{j}^{*})$ and the valuation of the task executer at $\mathcal{I}_{j}^{*}$ by $\Upsilon^{e}(\mathcal{I}_{j}^{*})$. For determining the payment of all winning \emph{task executers} and \emph{task requesters} we will take the help of the average of the cost of the task executer at $\mathcal{I}_{j}^*$ and the value of the \emph{task requester} at $\mathcal{I}_{j}^*$ given as $\eta$=$\frac{\Upsilon^{r}(\mathcal{I}_{j}^{*})+\Upsilon^{e}(\mathcal{I}_{j}^{*})}{2}$.
Mathematically, the payment of $i^{th}$ \emph{task executer} is given as:
\begin{equation}                   
 \mathcal{P}_{i}^{e}(\tau_i) =
  \begin{cases}
   \eta,        & \text{if $\Upsilon^e(\mathcal{I}_{j}^*) \geq \eta$ and $\Upsilon^r(\mathcal{I}_{j}^*) \leq \eta$ }\\
   \Upsilon^e(\mathcal{I}_{j}^*), & \text{otherwise}
  \end{cases}
  \end{equation}
Similarly, the payment of the $i^{th}$ \emph{task requester} is given as:
\begin{equation}                   
 \mathcal{P}_{i}^{r}(\tau_i)=
  \begin{cases}
   \eta,        & \text{if $\Upsilon^e(\mathcal{I}_{j}^*) \leq \eta$ and $\Upsilon^r(\mathcal{I}_{j}^*) \geq \eta$ }\\
   \Upsilon^r(\mathcal{I}_{j}^*), & \text{otherwise}
  \end{cases}
  \end{equation}
In this problem set-up, for any particular time slot $\tau_i \in \varmathbb{T}$ there might be two types of agents: (a) \emph{Freshly arrived} agents, 
(b) \emph{Still active} agents.  
For \emph{freshly} arrived \emph{task executers} and \emph{task requesters} the payment is calculated as shown below. More formally, the payment of $i^{th}$ \emph{task requester} is given as: 
\begin{equation}\label{pr1}                   
 \zeta^r(\tau_i) =
  \begin{cases}
  \max_{\rho^r \in [\hat{d}_i^r-\kappa,\ldots,~\tau_i] }  \{\mathcal{P}_{i}^{r}(\rho^r)\}, & \text{if task requester is freshly} \\ &\text{ arrived }\\
   \max \{\zeta^r(\tau_{i-1}), \mathcal{P}_{i}^{r}(\tau_i)\}, & \text{if task requester are still} \\ &\text{ active}
  \end{cases}
  \end{equation}
Here $\kappa$ is the maximum permitted gap between the arrival and departure of any arbitrary agent \emph{i}.
\begin{equation}\label{pr1}                   
 \zeta^e(\tau_i) =
  \begin{cases}
  \min_{\rho^e \in [\hat{d}_i^e-\kappa,\ldots,~\tau_i]} \{\mathcal{P}_{i}^{e}(\rho^e)\}, & \text{if task executer is freshly} \\ &\text{arrived }\\
   \min \{\zeta^e(\tau_{i-1}), \mathcal{P}_{i}^{e}(\tau_i)\}, & \text{if task executers are still} \\ &\text{ active}
  \end{cases}
  \end{equation}

Now, if after $\tau_i$ time slots if a task requester $i$ is a winner then the final payment of that task requester will be given by   $\mathcal{P}_{i}^r(\tau_i)=\max \{\zeta^r(\tau_{i-1}), \mathcal{P}_{i}^{r}(\tau_i)\}$
and similarly if after $\tau_i$ time slots if a task executer $i$ is a winner then the final payment of that task executer will be given by  $\mathcal{P}_{i}^e(\tau_i)=\min \{\zeta^e(\tau_{i-1}), \mathcal{P}_{i}^{e}(\tau_i)\}$.
\subsubsection{Allocation}
The input to the \emph{allocation phase} are the $j^{th}$ cluster in $\tau_i$ time slot $i.e.$ $\mathsterling_{j}^i$, the set of \emph{task requester} $\mathcal{R}$, the set of cost of task execution of \emph{task executers} $i.e.$ $\gamma^{e}$, and the set of values of \emph{task requesters} $i.e.$ $\gamma^r$. The output is the set of \emph{task requester}-\emph{task executer} winning pairs held in $\mathcal{A}_k$. Line 3 sorts the cluster $\mathsterling_{j}^i$ in ascending order based on the elements of $\mathcal{P}_e$ and held in $\mathcal{U}_c^{*}$ data structure. The set of active \emph{task requesters} are sorted in descending order based on the elements of $\mathcal{P}_r$ and held in $\mathcal{R}_c$ data structure. 
\begin{algorithm}[H]
\caption{Allocation ($\mathsterling_j^i$, $\mathcal{R}$, $\gamma^e$, $\gamma^r$)}
\begin{algorithmic}[1]
   \State  $\mathcal{A}_k \leftarrow \phi$
\State \textbf{begin}
    \State $\mathcal{U}_c^* \leftarrow Sort\_ascend$($\mathsterling_{j}^i$, $\mathcal{S}_i \cdot \chi_i^e$) \Comment Sorting based on $\chi_i^{e} \in \mathcal{P}_e$ for all $\mathcal{S}_i \in \mathsterling_{j}^i$ 
        \State $\mathcal{R}_c \leftarrow Sort\_descend$($\mathcal{R}$, $\mathcal{B}_i \cdot \chi_i^r$) \Comment Sorting based on $\chi_i^{r} \in \mathcal{P}_r$ for all $\mathcal{B}_i \in \mathcal{R}$ 
        \State $ \mathcal{I}_j \leftarrow argmax_{i} \{\chi_{i}^{r} -\chi_{i}^{e} \geq 0\}$
        \For{$k=1$ to $\mathcal{I}_j$}
        \State $\mathring{\mathcal{U}}_c \leftarrow \mathring{\mathcal{U}}_c \cup \{\mathcal{S}_k \in \mathcal{U}_c^*\}$ 
        \State $\mathring{\mathcal{R}} \leftarrow \mathring{\mathcal{R}} \cup \{\mathcal{B}_i \in \mathcal{R}_c\}$
        \State $\mathcal{A}_k \leftarrow \mathcal{A}_k \cup (\mathring{\mathcal{U}}_c, \mathring{\mathcal{R}})$
      
        \EndFor
         \State $\mathcal{U}_c^* \leftarrow \mathcal{U}_c^* \setminus \mathring{\mathcal{U}}_c$
         \State $\mathcal{R}_c \leftarrow \mathcal{R}_c \setminus \mathring{\mathcal{R}}$
        \Return ($\mathcal{A}_k$, $\mathcal{U}_{c}^*$, $\mathcal{R}_c$)

\State \textbf{end}
\end{algorithmic}
\end{algorithm}
Line 5 determines the largest index $i$ that satisfy the condition that $\chi_{i}^{r} - \chi_{i}^{e} \geq 0$. The \emph{for} loop in line 6-10 iterates over the $\mathcal{I}_j$ winning \emph{task executer}-\emph{task requester} pairs. In line 7 $\mathring{\mathcal{U}}_c$ data structure keeps track of all the winning \emph{task executers} at a particular time slot $\tau_i$ and in a particular cluster $\mathsterling_{i}^{j}$. The $\mathring{\mathcal{R}}$ data structure keeps track of all the $\mathcal{I}_j$ winning \emph{task requesters}. The $\mathcal{A}_k$ data structure in line 9 keeps track of all the winning \emph{task executer}-\emph{task requester} pairs. Line 11 and 12 removes the winning \emph{task executers} and winning \emph{task requesters} respectively from the auction market. Line 13 returns the allocation set $\mathcal{A}_k$, $\mathcal{U}_c^*$, and $\mathcal{R}_c$. 

\subsection{Analysis of STEM}
The STEM is a four stage mechanism consists of: \emph{main routine}, \emph{cluster formation}, \emph{allocation}, and \emph{payment}. So, the running time of STEM will be the sum of the running time of
\emph{main routine}, \emph{cluster formation}, \emph{allocation}, and \emph{payment}. Line 2 of the \emph{main routine} is bounded above by
O(1). The \emph{for} loop in line 3 executes for $s+1$ times as we have $s$ time slots. For the analysis purpose, WLOG we have $n \geq m$. Line 4 will take $O(n)$ time as there are \emph{n} task executers. Line 5 is bounded above by $O(m)$ as there are \emph{m} task requesters. Line 6-8 is bounded above by $O(n)$. The time taken by line 9-11 in worst case is given by $O(m)$. Considering the \emph{cluster formation} phase, in a given time horizon it is bounded above by $O(c \times k \times n) = O(ckn)$; where $c$ is number of iterations for which the change in the clusters are to be calculated. The sorting in line 14 and 15 is bounded above by $O(n\lg n)$ and $O(m \lg m)$ respectively. Now, talking about the \emph{payment phase} motivated by \cite{Jbre_INte_2005}, for \emph{k} different clusters line 3-28 will take $O(k \times \kappa \times n^2) = O(k\kappa n^2)$; where $\kappa$ is the patience bound. Line 29 in the \emph{payment phase} calls the \emph{allocation phase} that will contribute $O(n \lg n) + O(m \lg m)$ in the worst case. So, the overall time complexity is dominated by the \emph{payment phase} and is given as $O(k \times \kappa \times n^2)= O(k\kappa n^2)$. Line 17-20 in \emph{main routine} phase is bounded above by $O(n)$. Line 25 and 26 takes $O(n)$ and $O(m)$ time respectively. The overall running time of the STEM is: $O(n) + O(m) +  O(n) + O(ckn)+ O(n\lg n) + O(m \lg m) + O(k \kappa n^2) +O(n) + O(n) + O(m)$ = $O(k \kappa n^2)$. The analysis is carried out by considering the case $n \geq m$, similarly the case with $m \geq n$ can be tackled and will result in $O(k \kappa m^2)$.

\begin{lemma}\label{l1}
Agent i can't gain by misreporting their arrival time or departure time or both.
\end{lemma}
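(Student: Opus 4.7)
The plan is to prove the lemma separately for task executers and task requesters, exploiting the fact that the running-minimum rule for $\zeta^e$ and the running-maximum rule for $\zeta^r$ are dual under the substitution $(\min,\text{ascending sort}) \leftrightarrow (\max,\text{descending sort})$; once the task-executer case is established, the task-requester case follows by symmetric bookkeeping. Throughout I fix a task executer $\mathcal{S}_i$ with true window $[a_i^e, d_i^e]$ and enumerate the feasible misreports: $\hat{a}_i^e \geq a_i^e$, $\hat{d}_i^e \leq d_i^e$, and $\hat{d}_i^e - \hat{a}_i^e \leq \kappa$ (the executer cannot name an arrival before learning of the market, nor a departure after the deadline, nor a window exceeding the patience bound). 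Unrolling the recursion for $\zeta^e$ shows that if the mechanism allocates $\mathcal{S}_i$ at some slot $\tau^{*}$, the final payment equals $\min_{\rho \in [\hat{d}_i^e - \kappa,\, \tau^{*}]} \mathcal{P}_i^e(\rho)$.

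Next I would isolate the observation underpinning the whole argument: the hypothetical per-slot price $\mathcal{P}_i^e(\rho)$ depends on $\hat{\upsilon}_i^e$ and on the cluster-level McAfee-style auction state at $\rho$, but not on the pair $(\hat{a}_i^e, \hat{d}_i^e)$, which only controls which slots the executer is active in and which slots enter the running min. With this in hand I would walk through the three manipulation modes. \emph{Late arrival}: reporting $\hat{a}_i^e > a_i^e$ removes $\mathcal{S}_i$ from rounds in $[a_i^e, \hat{a}_i^e)$ while leaving the window $[\hat{d}_i^e - \kappa, \tau]$ unchanged; if the truthful winning slot $\tau^{*}$ lies in $[\hat{a}_i^e, d_i^e]$ the payment is identical, otherwise the executer either loses or wins at some later $\tau' > \tau^{*}$ whose payment window is a superset of the truthful one, giving a weakly smaller min. \emph{Early departure}: reporting $\hat{d}_i^e < d_i^e$ enlarges the payment window (since $\hat{d}_i^e - \kappa$ shifts backward), weakly lowering the min, and may additionally exclude $\tau^{*}$ from the reported active interval, in which case the executer simply loses. \emph{Joint misreport}: the combined window inherits both monotonicities, so the utility is weakly dominated by truth-telling.

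The step I expect to be the main obstacle is the invariance claim above, because a misreport in principle alters which agents are active in which rounds, and therefore could perturb the cluster assignments and the per-slot prices $\mathcal{P}_i^e(\rho)$ for rounds where $i$ is absent under misreport but present under truth. I would resolve this by arguing that the running min is taken only over rounds in which the pricing rule formally places $i$ in the cluster for the purpose of price discovery, so the comparison between truthful and misreported payments reduces to comparing minima over nested windows with identical per-slot values on their intersection; the windows satisfy $[\hat{d}_i^e - \kappa,\, \tau^{*}] \supseteq [d_i^e - \kappa,\, \tau^{*}]$ whenever $\hat{d}_i^e \leq d_i^e$, while the executer's set of winnable slots shrinks in the opposite direction. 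Putting the two monotonicities together yields $U' \leq U^{*}$ between misreport and truthful utilities, establishing the executer case; the requester case follows by replacing $\min$ with $\max$ and flipping each inequality, which completes the proof.
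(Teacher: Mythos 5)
Your proof follows essentially the same route as the paper's: restrict attention to no-early-arrival/no-late-departure misreports, observe that the reported window only controls which slots enter the running $\min$ (resp.\ $\max$) in the payment recursion, and conclude by monotonicity of minima over nested intervals that any misreport yields a weakly worse payment for a task executer, with the task requester case obtained by duality. You are in fact more careful than the paper, whose proof disposes of each case with a one-line appeal to ``the construction of the payment function'' and never raises the per-slot-price/cluster invariance issue you explicitly flag; your treatment is a strict refinement of the same argument rather than a different one.
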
	
\begin{proof}
As the agents can mis-report the arrival time or the departure time, so the proof can be illustrated into two parts considering both the cases separately.
\begin{itemize}
\item \textbf{Case 1 ($\hat{a}_{i}^{e} \neq a_i^e$)}: Fix $d_i^e$, $\tau_i$. Let us suppose an agent \emph{i} reports the arrival time as $\hat{a}_i^e$ such that $\hat{a}_{i}^{e} \neq a_i^e$ or in more formal sense $\hat{a}_{i}^{e} > a_i^e$. It means that, an agent \emph{i} will be aligned with more number of time slots before winning when reporting $\hat{a}_i^e$ than in the case when reporting \emph{truthfully} $i.e.$ $a_i^e$ as shown in \textbf{Fig.} \ref{fig:11}. Now, it is seen from the construction of the payment function that the agent \emph{i} will be paid less than or equal to the payment he/she (henceforth he) is receiving when reporting \emph{truthfully}.   
\begin{figure}[H] 
 \includegraphics[scale=0.55]{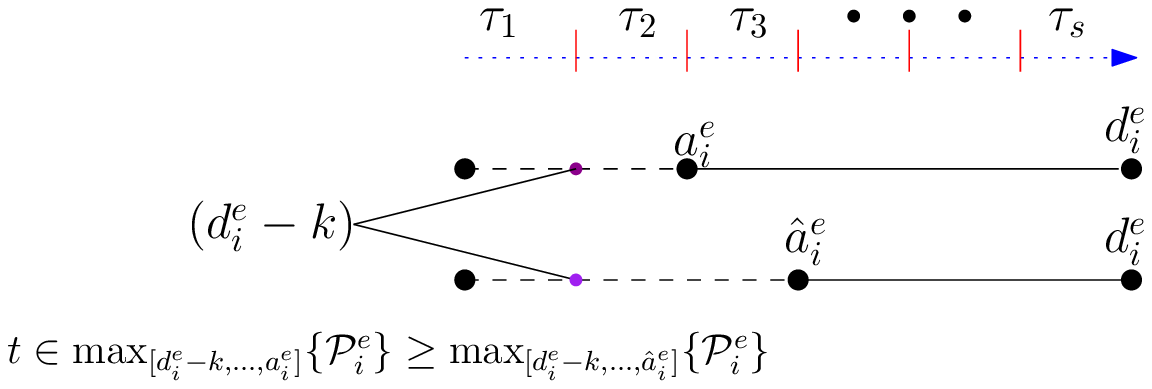}
 \caption{An agent \emph{i} mis-reporting arrival time $a_{i}^{e}$}
 \label{fig:11}

\end{figure}
\item \textbf{Case 2 ($\hat{d}_{i}^{e} \neq d_i^e$)}: Fix $a_i^e$, $\tau_i$. Let us suppose an agent \emph{i} reports the departure time as $\hat{d}_i^e$ such that $\hat{d}_{i}^{e} \neq d_i^e$ or in more formal sense $\hat{d}_{i}^{e} < d_i^e$. It means that, an agent \emph{i} will be aligned with more number of time slots before becoming inactive when reporting $\hat{d}_i^e$ than in the case when reporting truthfully $i.e.$ $d_i^e$ as shown in \textbf{Fig. \ref{fig:2}}. Now, it is seen from the construction of the payment function is that the agent \emph{i} will be paid less or equal to the payment he is paid when reporting \emph{truthfully}. 
\begin{figure}[H]
\centering
 \includegraphics[scale=0.55]{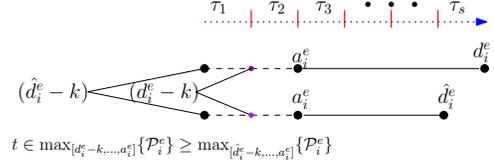}
 \caption{An agent \emph{i} mis-reporting departure time $d_{i}^{e}$}
 \label{fig:2}
\end{figure}
\end{itemize}
Considering the case 1 and case 2 above, it can be concluded that any agent \emph{i} can't gain by mis-reporting \emph{arrival time} or \emph{departure time}. The proof is carried out by considering the \emph{task executers}, similar argument can be given for the \emph{task requesters}. This completes the proof.
\end{proof}
\begin{lemma}\label{l2}
Agent i can't gain by misreporting his/her bid value.
\end{lemma}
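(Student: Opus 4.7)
The plan is to fix the reports of every agent other than agent $i$ and to show that truthful reporting of the bid is a best response. By Lemma~\ref{l1} I may further fix the arrival and departure times at their true values, so the remaining question is one-dimensional. The argument is symmetric between buyers and sellers (swap ascending for descending sort, $\min$ for $\max$, and the sign of the individual-rationality check), so I will argue only for a task executer $\mathcal{S}_i$ with true cost $\upsilon_i^e$ who contemplates a deviation $\hat{\upsilon}_i^e \neq \upsilon_i^e$.

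First I would extract a critical-value property from the payment function. In any slot $\tau$ inside the arrival-departure window, after the sort and the identification of the first losing index $\mathcal{I}_j^*$, the instantaneous price quoted to a winning executer is either $\eta = (\Upsilon^r(\mathcal{I}_j^*) + \Upsilon^e(\mathcal{I}_j^*))/2$ or $\Upsilon^e(\mathcal{I}_j^*)$. Both expressions depend only on the bids of agents other than $\mathcal{S}_i$, provided $\mathcal{S}_i$ is strictly in the winning prefix. Hence, conditional on being selected, her own bid does not affect what she is paid in that slot; her bid only controls the selection event itself.

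Next I would case-split on the direction of the deviation. If $\hat{\upsilon}_i^e > \upsilon_i^e$, then in the ascending sort $\mathcal{S}_i$ weakly moves to the right: either she remains inside the first $\mathcal{I}_j - 1$ positions and wins with the same price (and same utility), or she crosses the pivot and loses in a slot where she would truthfully have won, forfeiting the non-negative utility $\chi_i^e - \upsilon_i^e$ that the individual-rationality check in line~9 of Algorithm~4 guarantees. If $\hat{\upsilon}_i^e < \upsilon_i^e$, then $\mathcal{S}_i$ weakly moves to the left and the only way to increase her set of winning slots is to enter slots in which, under truthful reporting, the quoted price satisfied $\chi_i^e < \upsilon_i^e$ and she was therefore priced out. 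Since the quoted price is unchanged by her own lie, each newly-won slot contributes $\chi_i^e - \upsilon_i^e < 0$ to her utility, so the deviation is strictly harmful.

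Finally I would lift the per-slot argument to the whole window using the recursive definition of $\zeta^e$: each term $\mathcal{P}_i^e(\rho^e)$ inside the $\min$ is bid-independent in the sense above, so the $\min$ over the window inherits that property and the final payment is still a critical threshold for $\upsilon_i^e$. Combining the two cases yields $\varphi_i^e(\hat{\upsilon}_i^e) \leq \varphi_i^e(\upsilon_i^e)$, which proves the lemma; the buyer case follows by the symmetric argument using $\zeta^r$ and line~22 of Algorithm~4. The part I expect to be trickiest is the boundary case in which $\mathcal{S}_i$ is herself the pivot at index $\mathcal{I}_j^*$: there one must examine how the strict inequality $\gamma_i^r - \gamma_i^e < 0$ in the definition of $\mathcal{I}_j^*$ resolves ties between $\hat{\upsilon}_i^e$ and the adjacent buyer value, and verify that the individual-rationality screen forces any borderline quote to coincide with $\upsilon_i^e$, so that the deviation yields exactly zero incremental utility and no profitable manipulation survives.
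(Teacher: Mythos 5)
Your proposal is correct and follows the same basic skeleton as the paper's proof --- a case analysis on over- versus under-reporting crossed with winning versus losing --- but it supplies the key ingredient that the paper leaves implicit and, in places, garbles. You explicitly establish that the quoted price ($\eta$ or $\Upsilon^e(\mathcal{I}_j^*)$, both functions of the first losing pair) is independent of $\mathcal{S}_i$'s own bid so long as she remains in the winning prefix; the paper simply asserts $\hat{\varphi}_i^e = \varphi_i^e$ for a winner who deviates, without saying why the payment is unchanged. You also get the monotonicity direction right for a seller in the ascending sort (raising the reported cost moves her toward losing, lowering it toward winning), whereas the paper's Case~1 claims a winning executer who reports $\hat{\upsilon}_i^e > \upsilon_i^e$ ``would continue to win'' and its Case~2 concludes $\upsilon_j^e - \upsilon_i^e < 0$ from $\upsilon_j^e > \upsilon_i^e$ --- the intended conclusion is reached only because the roles of over- and under-bidding have been swapped relative to a seller's incentives. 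Finally, you lift the per-slot argument to the whole arrival--departure window through the $\min$ in $\zeta^e$, which the paper's Lemma~\ref{l2} omits entirely by fixing the slot; your observation that each term in that $\min$ is bid-independent is what actually makes the combined (value, arrival, departure) strategy space safe. The one place you should be careful, as you note yourself, is the pivot: you should verify that when $\mathcal{S}_i$'s deviation keeps her in the winning prefix, the identity of the first losing \emph{pair} (and hence $\mathcal{I}_j^*$) is unchanged --- this holds because her re-insertion anywhere in the prefix does not alter the relative order of the remaining agents at and beyond $\mathcal{I}_j^*$ --- and that a deviation which makes her the pivot herself simply converts her into a loser with zero utility. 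With that detail checked, your argument is a strictly tighter version of the paper's.
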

\begin{proof} Considering the case of \emph{task executers}. Fix the time slot $\tau_i \in \varmathbb{T}$ and the cluster.\\ 
\textbf{Case 1:}\\
 Let us suppose that the $i^{th}$ winning \emph{task executer} deviates and reports a bid value $\hat{\upsilon}_{i}^{e} > \upsilon_{i}^{e}$. As the \emph{task executer} was winning with $\upsilon_{i}^{e}$, with $\hat{\upsilon}_{i}^{e}$ he would continue to win and his utility $\hat{\varphi}_i^e = \varphi_i^e$. If instead he reports
 $\hat{\upsilon}_{i}^{e} < \upsilon_{i}^e$. Again two cases can happen. He can still win. If he wins his utility, according to the definition
 will be $\hat{\varphi}_i^e = \varphi_i^e$. If he loses his utility will be $\hat{\varphi}_i^e = 0 < \varphi_i^e$.\\  
\textbf{Case 2:}\\
 If the $i^{th}$ \emph{task executer} was losing with $\upsilon_{i}$ let us see whether he would gain by deviation. If he reports 
 $\hat{\upsilon}_{i}^{e} < \upsilon_{i}^{e}$, he would still lose and his utility $\hat{\varphi}_i^e = 0 = \varphi_i^e$. If instead he reports $\hat{\upsilon}_{i}^{e} > \upsilon_{i}^e$.
 Two cases can occur. If he still loses his utility $\hat{\varphi}_i^e = 0 = \varphi_i^e$. But if he wins, then he had to beat some valuation
$\upsilon_{j}^{e} > \upsilon_{i}^e$ and hence $\hat{\upsilon}_{i}^{e} > \upsilon_{j}^{e}$. Now as he wins his utility $\hat{\varphi}_i^{e} = \mathcal{P}_{i}^{e} - \upsilon_i^e = \upsilon_j^e - 
\upsilon_i^e < 0$. So he would have got a negative utility. Hence no gain is achieved. \\
Considering the case 1 and case 2 above, it can be concluded that any agent \
\emph{i} can’t gain
by mis-reporting his bid value. The proof is carried out by considering
the \emph{task executers}, similar argument can be given for the \emph{task requesters}. This completes
the proof.
\end{proof}

\begin{lemma}\label{l3}
STEM is weakly Budget balanced.
\end{lemma}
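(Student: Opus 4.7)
My plan is to establish that within each cluster $\mathsterling_j^i$ and each time slot $\tau_i \in \varmathbb{T}$ the platform does not run a deficit; weak budget balance of STEM as a whole will then follow by summing $\sum \mathcal{P}_i^r \geq \sum \mathcal{P}_i^e$ over all winning pairs across clusters in $\mathsterling^i$ and slots in $\varmathbb{T}$. I would fix a cluster and time slot and examine the sorted lists produced in Algorithm 5 (task executers in ascending order of cost, task requesters in descending order of value). By the definitions of $\mathcal{I}_j$ and $\mathcal{I}_j^*$, every winning index $\ell \leq \mathcal{I}_j$ satisfies $\Upsilon^r(\ell) \geq \Upsilon^e(\ell)$, while at the breakpoint $\Upsilon^r(\mathcal{I}_j^*) < \Upsilon^e(\mathcal{I}_j^*)$, which forces $\eta = \tfrac{1}{2}(\Upsilon^r(\mathcal{I}_j^*)+\Upsilon^e(\mathcal{I}_j^*))$ to lie between these two threshold values.

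I would then split into the two McAfee-style cases. In the first case the conditions in the first branch of both (1) and (2) are satisfied, so every matched pair transacts at the uniform price $\mathcal{P}_i^e = \mathcal{P}_i^r = \eta$, contributing exactly zero surplus per pair. In the second case the payments default to the threshold valuations at $\mathcal{I}_j^*$, and the marginal winning pair is dropped; for the surviving $\mathcal{I}_j - 1$ pairs the sorting still yields $\Upsilon^r(\ell) \geq \Upsilon^e(\ell)$, so each surviving requester is charged at least what the paired executer receives, again giving a nonnegative per-pair surplus. Summing these inequalities across the $\mathcal{I}_j$ (or $\mathcal{I}_j-1$) winning pairs in each cluster and then across clusters delivers weak budget balance for the time slot $\tau_i$.

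To lift the per-slot argument to the full online mechanism I would invoke the definitions of $\zeta^r$ and $\zeta^e$: the final price actually charged to a winning requester is $\max\{\zeta^r(\tau_{i-1}),\mathcal{P}_i^r(\tau_i)\}$, while the final price paid to a winning executer is $\min\{\zeta^e(\tau_{i-1}),\mathcal{P}_i^e(\tau_i)\}$. Since the $\max$ can only weakly increase what a requester pays and the $\min$ can only weakly decrease what an executer receives relative to the per-slot McAfee-style payments, the per-slot inequality $\mathcal{P}_i^r \geq \mathcal{P}_i^e$ is preserved under the online accumulation; summing over slots and clusters then yields the global statement.

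The main obstacle will be the boundary situation in which $\eta$ coincides with $\Upsilon^e(\mathcal{I}_j^*)$ or $\Upsilon^r(\mathcal{I}_j^*)$, because the branches of (1) and (2) can then trigger asymmetrically and threaten the per-pair inequality. I would resolve this by imposing the standard McAfee tie-breaking convention, namely, clearing at $\eta$ when it is feasible for the marginal winning pair and otherwise dropping precisely one pair, so that in the aggregate the count of winning requesters always equals the count of winning executers and the sum of requester payments dominates the sum of executer payments without ambiguity.
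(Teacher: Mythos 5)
Your proposal takes a genuinely different route from the paper, and as written it has a gap. The paper's proof does not perform a McAfee-style case analysis on the payment formulas at all: it simply observes that the Allocation phase only matches executer--requester pairs up to the index $\mathcal{I}_j = \arg\max_i\{\chi_i^r - \chi_i^e \geq 0\}$, where the $\chi$'s are the already min/max-adjusted final payments produced by the Payment phase; combined with the ascending/descending sorts this gives $\chi_k^r \geq \chi_{\mathcal{I}_j}^r \geq \chi_{\mathcal{I}_j}^e \geq \chi_k^e$ for every matched index $k \leq \mathcal{I}_j$, hence a nonnegative surplus per pair, and summing over pairs, clusters and time slots finishes the argument. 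This allocation condition is the actual source of weak budget balance in STEM, and you never invoke it.

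The McAfee-style analysis you propose instead does not survive contact with the paper's payment formulas. Since $\mathcal{I}_j^*$ is by definition a losing index, $\Upsilon^r(\mathcal{I}_j^*) < \Upsilon^e(\mathcal{I}_j^*)$, so $\eta$ lies strictly between them with $\Upsilon^r(\mathcal{I}_j^*) < \eta < \Upsilon^e(\mathcal{I}_j^*)$; consequently the first branch of (1) fires (the executer is paid $\eta$) while the first branch of (2) cannot (it requires $\Upsilon^e(\mathcal{I}_j^*) \leq \eta$). So your ``both first branches hold, uniform price $\eta$'' case never occurs, and your second case charges each requester $\Upsilon^r(\mathcal{I}_j^*)$ while paying each executer $\eta > \Upsilon^r(\mathcal{I}_j^*)$ --- a per-pair deficit, not a surplus. (In McAfee's mechanism the fallback prices are the valuations at the last \emph{winning} index, for which the buyer value dominates the seller cost; here the thresholds are taken at the first \emph{losing} index, which reverses that inequality.) Nor can you repair this by ``imposing the standard McAfee tie-breaking convention'' of dropping the marginal pair: STEM contains no such step, and a proof of a property of STEM must use STEM's own rules. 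Your observation that the online $\max/\min$ accumulation in $\zeta^r,\zeta^e$ preserves the per-pair inequality is correct in spirit but unnecessary once you note that the Allocation phase applies its feasibility test $\chi_i^r - \chi_i^e \geq 0$ directly to the adjusted payments.
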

\begin{proof}
Fix the time slot $\tau_i$ and cluster $\mathsterling_{j}^i$. This corresponds to the case when the sum of all the monetary transfers of all the agents type profiles is less than or equal to \emph{zero}. Now, the construction of our STEM is such that, any \emph{task executer} and \emph{task requester} is paired up only when $\mathcal{S}_i \cdot \mathcal{P}_i^{e} - \mathcal{B}_i \cdot \mathcal{P}_i^r \geq 0$. It means that, for any \emph{task executer}-\emph{task requester} pair there exist some surplus. In the similar fashion, in a particular time slot $\tau_i$ and in a particular cluster considering all the agents, $\sum_{i}\mathcal{S}_i \cdot \mathcal{P}_i^{e} - \sum_{i}\mathcal{B}_i \cdot \mathcal{P}_i^r \geq 0$. Hence, the sum total of payments made to the \emph{task executers} is at least as high as the sum total of the payments received by the \emph{task requesters}and their is a \emph{surplus}. Hence, it is proved that the STEM is budget balanced for a particular time slot $\tau_i$ and for a particular cluster. From our claim it must be true for any time slot $\tau_i \ldots \tau_s$ and any cluster. This completes the proof.       
\end{proof} 

\begin{lemma}\label{l4}
STEM is individual rational.
\end{lemma}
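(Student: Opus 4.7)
The plan is to verify the individual rationality condition separately for task executers and task requesters, and to lean heavily on the explicit feasibility filters that are already built into the Payment phase (Algorithm 4). Recall that a mechanism is individual rational if every agent obtains non-negative utility by participating. Agents who lose have utility exactly $0$ by definition, so they trivially satisfy the condition; the work lies in the winners.

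For a winning task executer $\mathcal{S}_i$, I would argue as follows. The Payment algorithm first computes a candidate payment $\chi_i^e$ using either the freshly-arrived rule ($\min$ over the $[\hat{d}_i^e-\kappa,\tau_i]$ window) or the still-active rule ($\min\{\mathcal{P}_i^e(\tau_i-1),\mathcal{P}_i^e(\tau_i)\}$). Immediately afterwards, lines 9--14 test the inequality $\chi_i^e \geq \hat{\upsilon}_i^e$; if it fails, $\mathcal{S}_i$ is ``priced out'' and cannot enter $\hat{\mathcal{U}}$, hence cannot be matched in the subsequent Allocation phase. Therefore every task executer who eventually appears in a winning pair must have $\mathcal{P}_i^e = \chi_i^e \geq \hat{\upsilon}_i^e$. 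Combining this with Lemma~\ref{l2}, which ensures $\hat{\upsilon}_i^e = \upsilon_i^e$ under truthful bidding, yields $\varphi_i^e = \mathcal{P}_i^e - \upsilon_i^e \geq 0$.

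The dual argument handles task requesters. Lines 17--22 compute $\chi_i^r$ as the corresponding $\max$-based quantity, and lines 21--27 retain $\mathcal{B}_i$ only when $\chi_i^r \leq \hat{\upsilon}_i^r$. Hence any winning task requester pays $\mathcal{P}_i^r = \chi_i^r \leq \hat{\upsilon}_i^r = \upsilon_i^r$, so $\varphi_i^r = \upsilon_i^r - \mathcal{P}_i^r \geq 0$. Quantifying over every cluster $\mathsterling_j^i$ and every time slot $\tau_i \in \varmathbb{T}$ then gives the claim for the whole mechanism.

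I do not expect a real obstacle here: the proof is essentially a direct reading of the algorithm, since IR is enforced constructively by the ``priced out'' branches. The only subtle point worth flagging is that the pricing rules in equations for $\zeta^e$ and $\zeta^r$ take a $\min$ (respectively $\max$) across time, which could in principle push the payment offered to a still-active executer below his valuation in a later slot; the algorithm handles exactly this by re-checking the IR inequality at every slot before committing to a match, so no executer is ever forced into a loss. Once that point is spelled out, the lemma follows cleanly.
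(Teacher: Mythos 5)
Your proposal is correct and follows essentially the same route as the paper: individual rationality is read directly off the explicit feasibility checks ($\chi_i^e \geq \hat{\upsilon}_i^e$ for executers, $\chi_i^r \leq \hat{\upsilon}_i^r$ for requesters) in the Payment phase, with losers trivially getting utility $0$. Your version is somewhat more detailed (explicitly tracing the ``priced out'' branches and the re-check at each slot), but the underlying argument is the same.
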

\begin{proof}
Fix the time slot $\tau_i$ and cluster $\mathsterling_{j}^i$. Individual rationality means that each agent gains a utility that is no less than he would get without participating in a mechanism. Considering the case of \emph{task requester}, when the \emph{task requester} is winning then it is ensured that he has to pay an amount $\mathcal{P}_i^r$ such that $\hat{\upsilon}_{i}^{r} \geq \mathcal{P}_i^r$. From this inequality it is clear that the winning \emph{task requester} has to pay amount less than his bid value. So, in this case it can be concluded that $\varphi_i^r = \hat{\upsilon}_{i}^{r} - \mathcal{P}_i^r \geq 0$. Moreover, if the \emph{task requester} is losing in that case his utility is 0. From our claim it must be true for any time slot $\tau_i \ldots \tau_s$ and any cluster. Similar argument can be given for the \emph{task executers}. This completes the proof.   
\end{proof}

\section{Conclusion and future works}
An incentive compatible mechanism is proposed in this paper to circumvent the location information in online double auction setting for the participatory sensing. In our future work we will focus on investigating the quality consequence in this environment. Another interesting direction is to find algorithms when the task requesters have some limited budgets.
\bibliographystyle{plain}
\bibliography{phd}
\end{document}